\documentclass[conference]{IEEEtran}
\IEEEoverridecommandlockouts
\newcommand{\customfootnoterule}{\hspace{-\parindent}\rule{.618\columnwidth}{0.5pt}\newline\vspace{3pt}}
\usepackage{xcolor}
\usepackage{balance}
\usepackage{comment}
\usepackage[font=footnotesize]{caption}
\setlength{\belowcaptionskip}{-7pt}
\newcommand{\spacetune}[1]{#1}
\usepackage{amsmath}
\usepackage{graphicx}
\usepackage{amsmath}
\usepackage[byname]{smartref}
\usepackage{txfonts}
\usepackage{tocloft}
\usepackage{afterpage}
\usepackage{amssymb}
\usepackage{stmaryrd}
\usepackage{setspace}
\usepackage{attrib}
\usepackage{pgfplots}
\pgfplotsset{compat=1.18}
\usepackage{subcaption}
\usepackage{mdwlist}
\usepackage{listings}
\usepackage{tikz}
\usetikzlibrary{matrix,decorations.pathreplacing}
\usepackage[framemethod=TikZ]{mdframed}
\usepackage[utf8x]{inputenc}
\usepackage[T1]{fontenc}
%


\newcommand{\hset}[1]{ \left\{ \! \left| \; #1 \; \right| \! \right\} }

\newcommand{\restrict}[2]{\left.#1\right|_{#2}}

\DeclareMathOperator{\supp}{supp}

\newcommand{\R}[1][]{\mathcal{R}_{#1}}

\newcommand{\pseudo}[1]{\widetilde{#1\vphantom{f}}}

\lstset{language=Pascal}
\lstset{morekeywords={try, catch, finally, fi, proc, local}}
\lstset{basicstyle=\ttfamily}

\def\ind[#1]{1_{#1}}

\def\hsetover[#1]{\mathbb{Z}^{#1}}

\newcommand{\mult}{\times}
\newcommand{\smallmult}{\times}
%
\makeatletter
%
{\clearemptydoublepage
 \begin{center}
  \section*{Acknowledgements}
 \end{center}
 \begingroup
}{\newpage\endgroup}
{\clearemptydoublepage 
 \begin{center}
  \section*{Dedication}
 \end{center}
 \begingroup
}{\newpage\endgroup}
{\pagestyle{plain}\pagenumbering{roman}}%
{\pagenumbering{arabic}}
\newtheorem{theorem}{Theorem}[section]

\newenvironment{proof}[1][Proof]{\begin{trivlist}
\item[\hskip \labelsep {\bfseries #1}]}{\end{trivlist}}
\newenvironment{definition}[1][Definition]{\begin{trivlist}
\item[\hskip \labelsep {\bfseries #1}]}{\end{trivlist}}

\newcommand{\qed}{\nobreak \ifvmode \relax \else
      \ifdim\lastskip<1.5em \hskip-\lastskip
      \hskip1.5em plus0em minus0.5em \fi \nobreak\ensuremath{\square}\fi}
\makeatother
\title{Hybrid Intervals and Symbolic Block Matrices}
\author{
\IEEEauthorblockN{Mike Ghesquiere$^\dagger$
   \thanks{\customfootnoterule $^\dagger$%
            Present address: Joy Life Inc, 407 E Duarte Rd $\#$E, Arcadia CA 91006, USA}}
\IEEEauthorblockA{Computer Science Department\\
University of Western Ontario, Canada\\
\texttt{mike.ghes@gmail.com} }
\and
\IEEEauthorblockN{Stephen M. Watt}
\IEEEauthorblockA{Cheriton School of Computer Science \\
University of Waterloo, Canada\\
\texttt{smwatt@uwaterloo.ca} }
}
\date{}
\begin{document}
\maketitle
\begin{abstract}
Structured matrices with symbolic  sizes appear frequently in the literature, especially in the description of algorithms for linear algebra.
Recent work has treated these symbolic structured matrices themselves as computational objects, showing how to add matrices with blocks of different symbolic sizes in a general way while avoiding a combinatorial explosion of cases.
The present article introduces the concept of hybrid intervals, in which points may have negative multiplicity.  
Various operations on hybrid intervals have compact and elegant formulations that do not require cases to handle different orders of the end points. 
This makes them useful to represent symbolic block matrix structures and to express arithmetic on symbolic block matrices compactly.   
We use these ideas to formulate symbolic block matrix addition and multiplication in a compact and uniform way.
\end{abstract}
%
%
\section{Introduction}
\label{sec:Introduction}
Block and other structured matrices appear throughout the mathematical and computer science literature.
They occur when systems have direct sum decompositions, when higher order systems are represented as Kronecker products, and when recursive methods are used.
They figure prominently in efficient algorithms for linear algebra, such as Strassen's $n^{\log_27}$ matrix multiplication and $LU$ decomposition.  
Block matrices have considerable practical applications as well.
For example, when multiplying large matrices, block algorithms can be used to improve cache complexity~\cite{lam1991cache}.
Additionally, in some cases, when a sub-matrices are known to have useful properties, many optimizations can arise.
For example to invert a block diagonal matrix, one can invert each block individually.

In the literature, the parts of structured matrices are often given with symbolic size, for example blocks of size $n\times m$ or diagonal bands of width $k$.  
These matrices are usually taken as inputs to algorithms that work with specific instances with particular values for the size parameters. 
It is well-understood how to work with such matrices as algebraic values when the size parameters are fixed.  

What is less well understood is how to do algebraic computation on block matrices as whole symbolic objects,
that is when the size parameters are symbolic. This remains an active area of research.   
As with expressions involving any piecewise functions,
one of the principal problems is dealing with the multitude of cases that arise in performing algebraic operations on matrices when the relationships among the symbolic size parameters are not fixed.   
This is illustrated in Figure~\ref{fig:MatAdditionPermutations}.

Suppose we have an expression involving $n$ binary operations on symbolic block matrices.  
Various approaches have been taken to address the proliferation of cases:

The most obvious approach is to enumerate all cases, giving a symbolic expression for the algebraic result in each case.  The problem with this approach is that it leads to a number of cases exponential in $n$.

Another approach is to create a single expression covering all cases, using multiplicative support functions to zero out the parts that do not apply in specific cases.  
This is the approach taken in~\cite{sexton2008abstract} and~\cite{sexton2009computing}, with $\sigma_{ij}$ being support functions that take on the values $0$ or $1$.  
While an algebra defined on the support functions allows some simplification, in general the resulting expressions can be of size exponential in $n$.

For operations with inverses, another approach is to presume a specific ordering of the symbolic size parameters and use generalized support functions to add or remove components as required.   
This is essentially what is behind the convention of orientation of integral and sum limits, so one can have identities such as $\int_a^b = \int_a^c + \int_c^b$, regardless of the ordering of $a$, $b$ and $c$.
This approach is taken in~\cite{sexton2009reasoning}, with $\xi_{ijk}$ being generalized multiplicative support functions that take on the values $0$, $1$ or $-1$.  
This leads to compact expressions, avoiding exponential growth, but can be applied only for operations having total inverses.
Aside from being unable to handle non-invertible operations, another problem with this method is that
it performs extra calculations to compute values and then use inverses to cancel them out.  

A more sophisticated approach is to realize that the use of inverse operations is actually a proxy for reversing the decision to include certain operands in expressions.   
It is possible to instead make the choice of operand inclusion or exclusion explicit.  
For functions that are associative and commutative, one can instead have a symbolic approach to gathering operands.  
This is the approach taken in~\cite{carette2010} and further studied in~\cite{mikegmasters}. 
Generalized multisets allowing negative multiplicities are used to collect arguments without concern for the order of inclusion or exclusion.  
These generalized multisets are known as ``hybrid sets.'' 
Previous work has shown that they allow efficient general composition of piece-wise symbolic functions, including matrices with symbolically defined regions.  In the present article we show how these ideas take a particularly elegant form using the notion of ``hybrid intervals,'' which are intervals generalized as hybrid sets.

\spacetune{\pagebreak}
The main contributions of this article are:
\begin{itemize}
    \item the concept of hybrid intervals, capturing the notion of inclusion and exclusion, 
    \item useful results about their properties,
    \item the use of hybrid intervals to reformulate symbolic block matrix addition elegantly,
    \item the use of hybrid intervals to express symbolic block matrix multiplication in a similar fashion.
\end{itemize}

The remainder of this article is organized as follows:
Section~\ref{sec:HybridSets} provides the required background on hybrid sets and hybrid functions.
Section~\ref{sec:HybridIntervals} introduces the notion of hybrid intervals and gives some of their properties.
Section~\ref{sec:VectorAddition} gives as an example using hybrid intervals of index sets for the addition of vectors whose structure is symbolically parameterized.  This is a familiar example from previous work, shown here to illustrate the present approach.
Section~\ref{sec:HigherDimensionIntervals} generalizes the idea of hybrid intervals to higher dimension.
Section~\ref{sec:MatrixAddition} shows how two-dimensional hybrid intervals of index sets may be used to formulate the addition of matrices with symbolic structure.
Section~\ref{sec:MatrixMultiplication} applies these ideas to matrix multiplication.
Some conclusions are given in Section~\ref{sec:Conclusions}.

\begin{figure}[t]
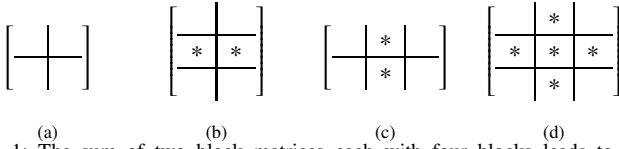
 
	\begin{subfigure}[b]{0.24\columnwidth}
		\begin{equation*}
			\left[ \begin{array}{c|c} \; & \; \\ \hline & \\ \end{array}\right]
		\end{equation*}
		\caption{}
	\end{subfigure}
	\begin{subfigure}[b]{0.24\columnwidth}
		\begin{equation*}
			\left[ \begin{array}{c|c} \; & \; \\ \hline * & * \\ \hline & \\ \end{array}\right]
		\end{equation*}\vspace{-4pt}
		\caption{}
	\end{subfigure}
	\begin{subfigure}[b]{0.24\columnwidth}
		\begin{equation*}
			\left[ \begin{array}{c|c|c} \; & * & \;  \\ \hline \; & * &\; \\ \end{array}\right]
		\end{equation*}
		\caption{}
	\end{subfigure}
	\begin{subfigure}[b]{0.24\columnwidth}
		\begin{equation*}
			\left[ \begin{array}{c|c|c} \; & * & \; \\ \hline * & * & * \\ \hline \; & * & \; \\ \end{array}\right]
		\end{equation*}\vspace{-4pt}
		\caption{}
	\end{subfigure}
	\caption[Possible block overlaps of $2 \times 2$ block matrices.] {
		The sum of two block matrices each with four blocks leads to 9 possible cases.
		When blocks are the same size, the sum will also be a $2 \times 2$ block matrix (a).
		Otherwise, a $2 \times 3$ (b) \emph{(two cases)}, $3 \times 2$ (c) \emph{(two cases)} 
		or $3 \times 3$ (d) \emph{(four cases)} block matrix could arise.
		The starred blocks may sample from different blocks depending on the relative size of operand blocks.}
	\label{fig:MatAdditionPermutations}
\end{figure}

%
%
\section{Hybrid Sets and Hybrid Functions}
\label{sec:HybridSets}
We summarize here the main concepts from~\cite{carette2010} used in this article. To begin, we note that the usual notion of a set $A$ of elements from a universe $U$ can be defined as a characteristic function $U \rightarrow \{0,1\}$ that gives $1$ for elements of $A$ and $0$ otherwise.   
A multiset may be viewed as a function $U \rightarrow \mathbb N_0$ that gives the multiplicity of an element.  
With this in mind, we make the following definition that provides negative multiplicities:
\begin{definition}
A \textbf{hybrid set} over a universe set $U$ is an integer-valued function $U\rightarrow \mathbb Z$
giving a \textbf{multiplicity} of each element.
\end{definition}
In what follows, some types of objects are generalized with hybrid sets.  
When necessary to make the distinction, we call the usual objects ``traditional'', \textit{e.g.} traditional sets.

If $H$ is a hybrid set, we write $H(x)$ for the multiplicity of $x$ in $H$.  We say $x \in H$ if $H(x) \ne 0$ and define the support $\supp H$ to be the traditional set $\{ x | x \in H \}$.  We use the notation $\hset{a^i, b^j, ...}$ to denote a hybrid set containing $a$ with multiplicity $i$, $b$ with multiplicity $j$, \textit{etc}.  
The empty hybrid set, denoted $\varnothing$, is one for which all elements have multiplicity zero.
The set operations $\cup$, $\cap$ and $\backslash$ may be defined on hybrid sets, but more useful are the element-wise combinations of multiplicities:
\begin{definition}
	For any two hybrid sets $A$ and $B$ over a common universe $U$, 
	we define the \textbf{operations} $\boldsymbol\oplus, \boldsymbol\ominus, \boldsymbol\otimes : \mathbb{Z}^U \times \mathbb{Z}^U \to \mathbb{Z}^U$ 
	such that for all $x \in U$:
	\begin{align*}
		(A \oplus B)(x) 	&= A(x) + B(x)  &
		(A \ominus B)(x) 	&= A(x) - B(x) \\
		(A \otimes B)(x) 	&= A(x) \times B(x) &
             \ominus A &= \varnothing \ominus A
	\end{align*}
	and for $c \in \mathbb{Z}$,
	\( (cA)(x) = c \times A(x) \).
\end{definition}
We shall use the following definitions:
\begin{definition}
	We say $\boldsymbol{A}$ \textbf{and} $\boldsymbol{B}$ \textbf{are disjoint} when $A \otimes B = \varnothing$.
\end{definition}
\begin{definition}
	A \textbf{generalized partition $\boldsymbol{P}$ of a hybrid set $\boldsymbol{H(x)}$} is a family of hybrid sets
	${P=\{P_i \}_{i=1}^n}$ such that:
	\begin{equation*}
		H = P_1 \oplus P_2 \oplus \ldots \oplus P_n
	\end{equation*}
	We say that \textbf{$\boldsymbol{P}$ is a strict partition of $\boldsymbol{H}$} if 
	$P_i$ and $P_j$ are disjoint when $i \neq j$.
\end{definition}
Clearly any traditional set may be considered a hybrid set. For the other direction, we have
\begin{definition}
	Given a hybrid set $H$ over universe $U$, 
	if for all $x \in U$, $H(x) \in \{0,1\}$, then we say that \textbf{$\boldsymbol{H(x)}$ is reducible}.
	If $H$ is reducible then we denote the \textbf{reduction of $\boldsymbol{H}$} by $\mathcal{R}(H)$ 
	as the (non-hybrid) set over $U$ with the same membership.  
\end{definition}
In what follows, we require the notion hybrid functions and some of their properties. We define these in terms of the graph of the function as a hybrid set.
\begin{definition}
	For two traditional sets $S$ and $T$, a hybrid set over their Cartesian product $S \times T$ is called a 
	\textbf{hybrid (binary) relation between $\boldsymbol{S}$ and $\boldsymbol{T}$}.
	A \textbf{hybrid function from $\boldsymbol{S}$ to $\boldsymbol{T}$} is 
	a hybrid relation $h$ between $S$ and $T$ such that $(x,y) \in h$ and $(x,z) \in h$ implies $y=z$.
\end{definition}
Given a hybrid set $H$ over $U$ and a function $f:B \to S$ with $B \subseteq U$ and $S$ a set,
we denote by $f^H$ the hybrid function from $B$ to $S$ defined by:
\begin{equation*}
	f^H := \bigoplus_{x \in B} H(x) \hset{ (x, f(x) )^1 }
\end{equation*}
\begin{definition}
	If $H$ is a reducible hybrid set, then \textbf{$\boldsymbol{f^H}$ is a reducible hybrid function} and
 we extend $\mathcal{R}$ by
	\begin{equation*}
		\mathcal{R}(f^H)(x) = \restrict{f}{\mathcal{R}(H)}(x)
	\end{equation*}
\end{definition}
With these definitions, we have for functions $f, g: U \to S$,
\( f^A \oplus f^B = f^{A \oplus B} \).
Also, for hybrid functions $f^A$ and $g^B$, $f^A \oplus g^B$ is a hybrid function
if and only if for all $x \in \supp (A \otimes B)$, we have $f(x) = g(x)$. In this case
we say that \textbf{$\boldsymbol{f^A}$ and $\boldsymbol{g^B}$ are compatible}.

One of the main points in what follows is that we will have expressions where collections of arguments are formed as hybrid sets before it is known whether the subexpressions are defined over the required domains.  Sub-expressions of undefined value are completely acceptable if their multiplicity is zero in the argument collection as a whole. For these subexpressions to be well-formed, we employ a notion of ``pseudo-functions''.
\begin{definition}
	We define a pseudo-function $\pseudo{f\;}^A$ as:
	\begin{equation}
		\label{eqn:pseudofunc}
 		\pseudo{f\;}^A = \bigoplus_{x \in B} A(x) \hset{(x,f)^1}
	\end{equation}
\end{definition}

The difference between hybrid functions and pseudo-functions is that $(x, f(x))$ is replaced with the ``unevaluated'' $(x,f)$.
This formally makes $\pseudo{f\;}^A$ a hybrid relation over $U \times (U \to S)$ as opposed to 
a hybrid function over $U \times S$.
To evaluate $\pseudo{f\;}^A$ we map back to $f^A$.
This mapping between $(x,f(x))$ and $(x,f)$ is natural and we will perform it as necessary, usually without comment.

%
%

\section{Hybrid Intervals}
\label{sec:HybridIntervals}

\begin{definition}
	Given a totally ordered set $(X, \leq)$ \emph{(and with an implied strict ordering $<$)}, 
	for any $a,b \in X$, an \textbf{interval between $\boldsymbol{a}$ and $\boldsymbol{b}$} 
	is the set of elements in $X$ between $a$ and $b$, as follows:
	\begin{align*} 
		{[a,b]}_X &=  \{ x \in X \;|\; a \leq x \leq b \} &
		{[a,b)}_X &=  \{ x \in X \;|\; a \leq x < b \} \\
		{(a,b]}_X &=  \{ x \in X \;|\; a < x \leq b \} &
		{(a,b)}_X &=  \{ x \in X \;|\; a < x < b \}
	\end{align*}
	When context makes $X$ obvious or the choice of $X$ is irrelevant, the subscript may be omitted.
\end{definition}

It should be noted that when $b$ is less than $a$, $[a,b]$ is the empty set. 
In terms of idempotency, the bounds determine whether or not an interval is empty.
$[a,a]$ which contains $a$ and all points equivalent to $a$ while $(a,a)$, $(a,a]$, and $[a,a)$ are all empty sets.
As intervals are simply sets, they can naturally be interpreted as hybrid sets.
If $a \leq b \leq c$, for intervals then we have $[a,b) \oplus [b,c) = [a,c)$.
In this case, $\oplus$ seems to behave like concatenation but this is not always true.
Considering all possible relative orders of $a$, $b$ and $c$ gives
\begin{equation*}
	[a,b) \oplus [b,c) =
	\begin{cases}
		\; [a,c) & a \leq b \leq c \\
		\; [a,b) & a \leq c \leq b \lor c \leq a \leq b \\
		\; [b,c) & b \leq a \leq c \lor b \leq c \leq a\\
		\; \varnothing & c \leq b \leq a. \\
	\end{cases}
\end{equation*}

One could alternatively write $[a,b)\oplus [b,c) = [\; \min(a,b),\max(b,c) \;)$ but this simply sweeps the problem 
under the rug.
When working with intervals, a case-based approach to consider relative ordering of endpoints easily becomes 
quite cumbersome.
Previously, the $\xi$ function was introduced in \cite{sexton2008abstract} to solve this problem.
Although it solves the problem of cases, it quickly leads to heavy notation.
Instead we introduce hybrid intervals which are considerably more readable.
It should be noted that the definitions are equivalent; $\xi(i,y,z)$ and $[\![y,z)\!)$ can be used interchangeably.

\begin{definition}
	We define \textbf{hybrid intervals} with $a,b\in X$, where $X$ is a totally ordered set, 
	using hybrid set point-wise subtraction as follows:
	\begin{align*}
			{[\![ a,b )\!)} &= [a,b) \ominus [b,a) &
			{(\!( a,b ]\!]} &= (a,b] \ominus (b,a] \\
			{[\![ a,b ]\!]} &= [a,b] \ominus (b,a) &
			{(\!( a,b )\!)} &= (a,b) \ominus [b,a]
	\end{align*}
\end{definition}

For any choice of \emph{distinct} $a$ and $b$, exactly one term will be empty; there can be no ``mixed'' multiplicities from a single hybrid interval.
Unlike traditional intervals where $[a,b)$ would be empty if $b < a$,  
the hybrid interval $[\![a,b)\!)$ will have elements with negative multiplicity.
Several results follow immediately from this definition.

\begin{theorem} For all $a,b,c$, 
	\begin{align*}
		{[\![a,b)\!)} &= \ominus\, [\![b,a)\!) &
		{(\!(a,b]\!]} &= \ominus\, (\!(b,a]\!] \\
		{[\![a,b]\!]} &= \ominus\, (\!(a,b)\!) &
		{(\!(a,b)\!)} &= \ominus\, [\![a,b]\!]
	\end{align*}
\end{theorem}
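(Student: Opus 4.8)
The plan is to obtain all four identities in one stroke by unfolding the definition of a hybrid interval and using nothing beyond the pointwise behaviour of $\ominus$. The one fact doing the work is that for any hybrid sets $A,B$ over a common universe, $\ominus(A\ominus B)=B\ominus A$, since pointwise $0-(A(x)-B(x))=B(x)-A(x)$. Every hybrid interval is by definition a difference of the shape $(\text{interval})\ominus(\text{endpoint-reversed interval})$, so negating one merely interchanges the two parts of that difference — which is exactly what each right-hand side claims.

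Carrying this out, I would present each line as a three-link chain: invoke the definition, apply $\ominus(A\ominus B)=B\ominus A$, then read the result back as a hybrid interval. For the half-open forms this is immediate,
\begin{equation*}
 \ominus\,{[\![ b,a )\!)}=\ominus\bigl([b,a)\ominus[a,b)\bigr)=[a,b)\ominus[b,a)={[\![ a,b )\!)},
\end{equation*}
and the same computation with $($ and $]$ in place of $[$ and $)$ gives $\ominus\,{(\!( b,a ]\!]}={(\!( a,b ]\!]}$. The closed and open forms go through identically; for instance
\begin{equation*}
 \ominus\,{(\!( b,a )\!)}=\ominus\bigl((b,a)\ominus[a,b]\bigr)=[a,b]\ominus(b,a)={[\![ a,b ]\!]},
\end{equation*}
and the remaining identity is this one with $\ominus$ applied to both sides.

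The only delicate point — bookkeeping rather than mathematics, and the closest thing to an obstacle — is matching each hybrid interval with the correct partner across the negation. Since the definition attaches the open/closed decorations asymmetrically, pairing the closed interval $[a,b]$ with the reversed \emph{open} interval $(b,a)$ and the open interval $(a,b)$ with the reversed \emph{closed} interval $[b,a]$, negating a closed–closed hybrid interval yields an open–open one with the endpoints exchanged, and conversely, while each half-open form is its own type under this operation. Once this pairing is written down, every line is literally three equalities long, and — the point of the whole formalism — no case split on the relative order of $a$ and $b$ is needed: each is an identity of hybrid sets valid whether $a<b$, $a=b$, or $a>b$.
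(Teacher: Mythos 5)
Your proof is correct and is essentially the paper's own argument: unfold the definition, apply the pointwise identity $\ominus(A\ominus B)=B\ominus A$, read the result back as a hybrid interval, and get the last identity from the previous one by negating both sides. Note that, exactly like the paper's proof, what you establish for the closed/open pair is the endpoint-swapped form $[\![a,b]\!]=\ominus\,(\!(b,a)\!)$ and $(\!(a,b)\!)=\ominus\,[\![b,a]\!]$, which is the correct identity — the theorem's displayed third and fourth lines, which keep the endpoints in the same order, appear to be a typo, and your explicit remark that negating a closed--closed hybrid interval exchanges the endpoints is the right reading.
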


\begin{proof}
	All the identities can all be shown in the same fashion:
	\begin{align*}
		[\![a,b)\!) = [a,b) \ominus [b,a) = \ominus \big( [b,a) \ominus [a,b) \big) = \ominus [\![b,a)\!) \\ 
		(\!(a,b]\!] = (a,b] \ominus (b,a] = \ominus \big( (b,a] \ominus (a,b] \big) = \ominus (\!(b,a]\!] \\ 
		[\![a,b]\!] = [a,b] \ominus (b,a) = \ominus \big( (b,a) \ominus [a,b] \big) = \ominus (\!(b,a)\!)
	\end{align*}
	Since $[\![a,b]\!] = \ominus (\!(b,a)\!)$ we also have $(\!(a,b)\!) = \ominus [\![b,a]\!]$. 
	\hfill $\qed$
\end{proof}

We note here how hybrid intervals behave when $a=b$.
Like their traditional analogues, the hybrid intervals $[\![ a,a )\!)$ and $(\!( a,a ]\!]$ are still both empty sets.
The interval $[\![a,a]\!]$ still contains points equivalent to $a$ (with multiplicity 1).
However, unlike traditional intervals $(\!(a,a)\!)$ is \emph{not} empty but rather, $(\!(a,a)\!) = \ominus [\![a,a]\!]$ and so contains all points equivalent to $a$ but with a multiplicity of $-1$.
The advantage of using hybrid intervals is that now $\oplus$ does behave like concatenation.

\begin{theorem}
	For all $a,b,c$ (regardless of relative ordering),
	\begin{equation*}
		[\![ a,b )\!) \oplus [\![ b,c )\!) = [\![ a,c )\!)
	\end{equation*}
\end{theorem}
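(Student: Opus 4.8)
The plan is to unfold both hybrid intervals by their defining point-wise subtractions and then reduce the problem to a statement purely about traditional half-open intervals, where associativity of $\oplus$ makes the bookkeeping routine. Writing out the definitions,
\[
[\![a,b)\!) \oplus [\![b,c)\!) = \big( [a,b) \ominus [b,a) \big) \oplus \big( [b,c) \ominus [c,b) \big),
\]
and since $\oplus$ and $\ominus$ are just point-wise addition and subtraction of integer-valued functions, the right-hand side equals
\[
\big( [a,b) \oplus [b,c) \big) \ominus \big( [b,a) \oplus [c,b) \big).
\]
So it suffices to show this equals $[\![a,c)\!) = [a,c) \ominus [c,a)$.

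First I would establish the single auxiliary identity on traditional half-open intervals: for all $p,q,r$,
\[
[p,q) \oplus [q,r) = [p,r) \oplus [q,\max(p,q)) \oplus [\min(q,r),r)
\]
or, more in the spirit of the paper, I would just prove directly the needed cancellation by evaluating multiplicities at an arbitrary point $x$. For a fixed $x$, each traditional interval contributes $0$ or $1$ according to a threshold condition on $x$ relative to the endpoints; since $X$ is totally ordered, the relative order of $a$, $b$, $c$ is one of six (or fewer, allowing equalities) possibilities, but the cleaner route is to avoid the full case split by noting that for traditional intervals the identity $[a,b) \oplus [b,c) = [a,c) \oplus [b,a) \oplus [c,b) \ominus$-correction is awkward — so instead I would verify the compact fact that
\[
\big([a,b)\oplus[b,c)\big)\ominus[a,c) = \big([b,a)\oplus[c,b)\big)\ominus[c,a)
\]
holds point-wise. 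Evaluating at $x$: the left side is $\ind[a\le x<b] + \ind[b\le x<c] - \ind[a\le x<c]$ and the right side is $\ind[b\le x<a] + \ind[c\le x<b] - \ind[c\le x<a]$. Each side is a function of where $x$ sits among the three thresholds $a,b,c$, and a short check (it is enough to consider the position of $x$ in the sorted order of $\{a,b,c\}$, handling ties by the convention that $[u,v)$ is empty when $v\le u$) shows both sides equal $\ind[b\le x<a] - \ind[a\le x<b]$ — equivalently $\mathrm{sgn}$-like contributions that depend only on whether $x$ lies between $a$ and $b$ and on which side. Rearranging this point-wise equality gives exactly $[\![a,b)\!)\oplus[\![b,c)\!) = [\![a,c)\!)$.

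An alternative, slicker approach avoids even that case check: observe that the map sending the formal symbol $[\![a,b)\!)$ to the function $x \mapsto \ind[a\le x] - \ind[b\le x]$ is well-defined (this is just $[a,b)\ominus[b,a)$ rewritten, since exactly one of $[a,b)$, $[b,a)$ is nonempty and $\ind[a\le x]-\ind[b\le x]$ already encodes the signed difference correctly in all orderings), and under this representation the claim is the telescoping identity
\[
\big(\ind[a\le x]-\ind[b\le x]\big) + \big(\ind[b\le x]-\ind[c\le x]\big) = \ind[a\le x]-\ind[c\le x],
\]
which is immediate. The main obstacle is really just setting up this representation cleanly — i.e. verifying that $[a,b)\ominus[b,a)$ genuinely equals $x\mapsto \ind[a\le x]-\ind[b\le x]$ in every ordering of $a,b$, including the degenerate case $a=b$ where both sides vanish. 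Once that lemma is in hand the theorem collapses to telescoping, so I would present that representation lemma first and then the one-line computation above.
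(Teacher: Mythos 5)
Your proposal is correct, but it proves the theorem by a genuinely different route than the paper. The paper, after the same initial unfolding $[\![a,b)\!)\oplus[\![b,c)\!) = \big([a,b)\oplus[b,c)\big)\ominus\big([c,b)\oplus[b,a)\big)$, argues by an explicit six-way case analysis on the relative order of $a$, $b$, $c$ (the cases $a\le b\le c$, $b\le a\le c$, $a\le c\le b$, and their mirror images with $c<a$), verifying in each case that the combination collapses to $[a,c)$ or $\ominus[c,a)$. Your second, ``slicker'' route replaces all of that with the representation lemma $[\![a,b)\!)(x) = 1_{a\le x} - 1_{b\le x}$ for every ordering of $a,b$ (including $a=b$), after which the theorem is the telescoping identity $\big(1_{a\le x}-1_{b\le x}\big)+\big(1_{b\le x}-1_{c\le x}\big)=1_{a\le x}-1_{c\le x}$. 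The lemma itself needs only a two-case check on $a$ versus $b$, so your argument trades the paper's six cases in three variables for two cases in two variables, and it has the added benefit that the same representation immediately yields the paper's subsequent remarks (associativity of concatenation, identities such as $[\![a,b)\!)=\ominus[\![b,a)\!)$, and the analogues with other endpoint conventions, via $1_{a<x}$ in place of $1_{a\le x}$). One caveat: in your first, discarded route the claim that both sides of the auxiliary point-wise identity equal $1_{b\le x<a}-1_{a\le x<b}$ is not correct (e.g.\ for $a\le x<b\le c$ both sides are $0$, not $-1$); this does not affect your final plan, since the representation-lemma argument you actually propose to present is self-contained and sound.
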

\begin{proof}
	Following from definitions we have:
	\begin{align*}
		[\![a,b)\!) \oplus [\![ b,c )\!)
		& = \bigg ( [a,b) \ominus [b,a)  \bigg ) \oplus \bigg ( [b,c) \ominus [c,b) \bigg )\\
		& = \bigg ( [a,b) \oplus  [b,c) \bigg ) \ominus \bigg ( [c,b) \oplus  [b,a) \bigg )
	\end{align*}
		Case 1: $a \leq c$. We have $[c,a) = \varnothing$ and so $[\![a,c)\!) = [a,c)$. 
		\begin{description}
			\item[Case 1.a: $a \leq b \leq c$]~\\ then $[c,b) = [b,a) = \varnothing$ and $[a,b) \oplus [b,c) = [a,c)$
			\item[Case 1.b: $b \leq a \leq c$]~\\then $[b,c) \ominus [b,a) = [b,a) \oplus [a,c) \ominus [b,a) = [a,c)$
			\item[Case 1.c: $a \leq c \leq b$]~\\then $[a,b) \ominus [c,b) = ([a,c) \oplus [c,b)) \ominus [c,b) = [a,c)$
		\end{description}
		Case 2: $c < a$. We have $[a,c) = \varnothing$ and so $[\![a,c)\!) = \ominus [c,a)$. 
		\begin{description}
			\item[Case 2.a: $c \leq b \leq a$]~\\ 
				then $[a,b) = [b,c) = \varnothing$ and $\ominus [c,b) \ominus [b,a) = \ominus [c,a)$
			\item[Case 2.b: $b \leq c \leq a$]~\\ 
				then $\ominus [b,a) \oplus [b,c) = \ominus ([b,c) \oplus [c,a)) \oplus [b,c) = \ominus[c,a)$
			\item[Case 2.c: $c \leq a \leq b$]~\\ 
				then $\ominus [c,b) \oplus [a,b) = \ominus ([c,a) \oplus [a,b)) \oplus [a,b) = \ominus[c,a)$
		\end{description}
            \hfill $\qed$
\end{proof}

This sort of reasoning is routine but a constant annoyance when dealing with intervals 
and is exactly the reason we want to be working with hybrid intervals.
Now that the above work is done, we can use hybrid intervals 
and not concern ourselves with the relative ordering of points.
Many similar formulations such as $[\![ a,b ]\!] \oplus (\!( b,c )\!) = [\![a,c)\!)$ or $(\!(a,b)\!) \oplus [\![b,c)\!) = (\!(a,c)\!)$ 
are also valid for any ordering of $a,b,c$ by the same sort of argument.

%
%

\section{Vector Addition}
\label{sec:VectorAddition}
Addition for partitioned vectors and $2 \times 2$ matrices using hybrid functions has already been considered in \cite{carette2010} and~\cite{sexton2008abstract}.  
Here we show how the same may be accomplished using hybrid intervals.
We take a familiar  example from~\cite{sexton2008abstract} and adapt it to use this formulation.
This shows  hybrid intervals in use and serves as preparation for the addition and multiplication of 
symbolic block matrices using these concepts.

We consider the addition of the $n$-component vectors $U$ and $V$, each consisting of two parts with indices $[1,k]$ and $(k,n]$ for $U$ and $[1,\ell]$ and $(\ell, n]$ for $V$.
Over each interval, taking the value of different functions, as in:
\begin{align*}
	U &= [ u_1, u_2, \ldots, u_{k}, u'_1, u'_2, \ldots, u_{n-k} ] \\
	V &= [ v_1, v_2, \ldots, v_{\ell}, v'_1, v'_2, \ldots, v_{n-\ell} ].
\end{align*}
Using intervals, these vectors can be represented by hybrid functions over their indices,
for example
\begin{align*}
	U &= (i \mapsto u_i)^{[\![1, k]\!]} \oplus (i \mapsto u'_{i-k})^{(\!(k,n]\!]} \\
	V &= (i \mapsto v_i)^{[\![1, \ell]\!]} \oplus (i \mapsto v'_{i-\ell})^{(\!(\ell,n]\!]}.
\end{align*}
For clarity and succinctness we use $(u_i)$ instead of $(i \mapsto u_i)$:
\begin{align*}
	U &= (u_i)^{[\![1, k]\!]} \oplus (u'_{i-k})^{(\!(k,n]\!]} \\
	V &= (v_i)^{[\![1, \ell]\!]} \oplus v'_{i-\ell})^{(\!(\ell,n]\!]}.
\end{align*}
To add $U$ and $V$, we have
\begin{align*}
	U + V
	&= \left( (u_i)^{[\![1, k]\!]} \oplus (u'_{i-k})^{(\!(k,n]\!]} \right) 
		+
		\left( (v_i)^{[\![1, \ell]\!]} \oplus (v'_{i-\ell})^{(\!(\ell,n]\!]} \right) \\
	&= \left( (u_i)^{[\![1, k]\!]} \oplus (u'_{i-k})^{(\!(k,\ell]\!]} \oplus (u'_{i-k})^{(\!(\ell,n]\!]} \right) \\
		&+
		\left( (v_i)^{[\![1, k]\!]} \oplus (v_i)^{(\!(k, \ell]\!]} \oplus (v'_{i-\ell})^{(\!(\ell,n]\!]} \right) \\
	&= \R[+] \left( (u_i + v_i)^{[\![1, k]\!]} 
		\oplus (u'_{i-k} + v_i)^{(\!(k,\ell]\!]} 
		\oplus (u'_{i-k}+v'_{i-\ell})^{(\!(\ell,n]\!]} \right).
\end{align*}

The choice to partition $[\![1,n]\!]$ as $[\![1,k]\!] \oplus (\!(k,\ell]\!] \oplus (\!(\ell, n]\!]$ is only one possible refinement.
We can just as easily use $[\![1,\ell]\!] \oplus (\!(\ell, k]\!] \oplus (\!(k, n]\!]$ to get the equivalent expression:
\begin{equation*}
	U + V = \R[+] \left( (u_i + v_i)^{[\![1, \ell]\!]} 
		\oplus (u_{i} + v'_{i-\ell})^{(\!(\ell,k]\!]} 
		\oplus (u'_{i-k}+v'_{i-\ell})^{(\!(k,n]\!]} \right)
\end{equation*}

We must be careful while evaluating these expressions to not forget that $(u'_{i-k} + v_i)$ 
is actually shorthand for the function:
\begin{equation*}
	(u'_{i-k} + v_i) = (i \mapsto u'_{i-k}) + (i \mapsto v_i) = (i \mapsto u'_{i-k} + v_i).
\end{equation*}
As a function, it may not be evaluable over the entire range implied in a given term.
Using pseudo-functions easily solves this.

For example, consider the concrete example where $n=5$, $k=4$ and $\ell = 1$ so that
$U = [ u_1, u_2, u_3, u_4, u'_1 ]$ and
$V = [ v_1, v'_1, v'_2, v'_3, v'_4 ]$.
We also only assume that the functions $u_i, u'_i, v_i$ and $v'_i$ are defined only on the intervals in which they 
appear (e.g. $u_5$ is undefined, as is $v'_1$).
Then we have
\begin{equation*}
	U + V = (u_i + v_i)^{[\![1,4]\!]} \oplus (u'_{i-4} + v_i)^{(\!(4,1]\!]} \oplus (u'_{i-4} + v'_{i-1})^{(\!(1,5]\!]}.
\end{equation*}

None of the individual sub-terms cannot be evaluated directly.
In the first term, $v_i$ is not totally defined over the interval $[\![1,4]\!]$.
In the third term, on the interval $(\!(1,5]\!]$, $u'_{i-4}$ would even evaluated on negative indices.
However, these un-evaluable terms also appear in the middle term, however the interval $(\!(4,1]\!]$ 
has negative mutliplicity so the offending points end up with mutliplicity zero so are properly ignored.
\begin{align*}
	U + V
		&= (u_i + v_i)^{[\![1,1]\!] \oplus (\!(1,4]\!]} 
			\oplus (u'_{i-4} + v_i)^{\ominus(\!(1,4]\!]} 
			\oplus (u'_{i-4} + v'_{i-1})^{(\!(1,4]\!] \oplus (\!(4,5]\!]}\\
		&= (u_i + v_i)^{[\![1,1]\!]} 
			\oplus \left((u_i + v_i) - (u'_{i-4} + v_i) + (u'_{i-4} + v'_{i-1})\right)^{[\![1,4]\!]} 
   \\&
			\oplus (u'_{i-4} + v'_{i-1})^{(\!(4,5]\!]} \\ 
		&= (u_i + v_i)^{[\![1,1]\!]} 
			\oplus (u_i + v'_{i-1})^{(\!(1,4]\!]} 
			\oplus (u'_{i-4} + v'_{i-1})^{(\!(4,5]\!]}.
\end{align*}

%
%

\section{Higher Dimension Intervals}
\label{sec:HigherDimensionIntervals}

Hybrid intervals work perfectly well when dealing with the indices of a vector. 
However, we are more interested in the rectangular blocks of a matrix.
We can move from 1-dimensional intervals to 2-dimensional blocks using the Cartesian product
\begin{definition}
	Let $X = \hset{ x_1^{m_1}, ... , x_k^{m_k} }$ and $Y= \hset{ y_1^{n_1}, ... , y_\ell^{n_\ell} }$ be hybrid sets
	over sets $S$ and $T$
	We define the \textbf{Cartesian product of hybrid sets $\boldsymbol{X}$ and $\boldsymbol{Y}$}, to be a hybrid set
	over $S \times T$ and denoted with $\times$ operator as
	\begin{equation*}
		X \times Y = \hset{ (x, y)^{m \cdot n} \; : \; x \in^m X, y \in^n Y }.
	 \end{equation*}
\end{definition}

If $[\![a,b]\!]$ and $[\![c,d]\!]$ are both have positive mutliplicity in $\mathbb{R}$ then their Cartesian product 
$[\![a,b]\!] \times [\![c,d]\!]$, as shown in Figure~\ref{fig:productofintervals}, is clearly a two dimensional rectangle in $\mathbb{R}^2$.
If one of $[\![a,b]\!]$ or $[\![c,d]\!]$ had negative mutliplicity then we would have a hybrid rectangle with points of negative mutliplicity.
If both intervals had negative mutliplicity, then the signs would combine to give points in the Cartesian product with \emph{positive} mutliplicity.

\begin{figure}[t] 
	\centering 
	\begin{tikzpicture}[y=0.75cm, x=1.5cm]	
	\draw(0,0) -- coordinate (x axis mid) (4,0);
    	\draw (0,0) -- coordinate (y axis mid) (0,4);
    	
    	\draw[fill] (1,1pt) rectangle (3,-1pt);    	
    	\draw (1, 3pt) -- (1, -3pt) node[anchor=north] {$a$};
    	\draw (3, 3pt) -- (3, -3pt) node[anchor=north] {$b$};
    	\draw (2, 0) node[anchor=north] {$[\![a,b]\!]$};
    	
    	\draw[fill] (1pt,1) rectangle (-1pt, 3);
    	\draw (3pt, 1) -- (-3pt, 1) node[anchor=east] {$c$};
    	\draw (3pt, 3) -- (-3pt, 3) node[anchor=east] {$d$};
    	\draw (0,2) node[anchor=east] {$[\![c,d]\!]$};
    	
    	\draw[fill, color=black!20] (1,1) rectangle (3,3);
    	\draw (2, 2) node {$[\![a,b]\!] \times [\![c,d]\!]$ };
	\end{tikzpicture}
	\caption[Cartesian product of two 1-rectangles] { 
		The Cartesian product of two positive multiplicity 1-rectangles $[\![a,b]\!]$ and $[\![c,d]\!]$ 
		is a positive multiplicity 2-rectangle.
	\label{fig:productofintervals}}
\end{figure}
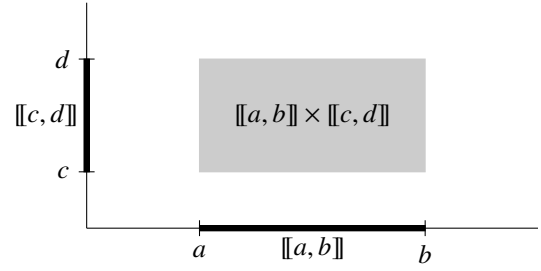

There is no reason to stop here.
$[\![a,b]\!]\times [\![c,d]\!]$ is still a hybrid set, we can take its Cartesian product with another interval, say $[\![e,f]\!]$
to get a rectangular cuboid in $\mathbb{R}^3$.
We should note here that we do not distinguish between $((x,y),z)$ and $(x,(y,z))$
 but rather we treat both as different names for the ordered triple $(x,y,z)$.
That is, the Cartesian product is associative,
\begin{gather*}
	\hset{ ((x, y), z)^{(m \cdot n)\cdot p} | x \in^m X, y \in^n Y, z \in^p Z }
	= X \times Y \times Z \\
	= \hset{ (x, (y, z)^{m \cdot (n\cdot p)} |  x \in^m X, y \in^n Y, z \in^p Z }.
\end{gather*}

Although we will not be using them in this article, 
the objects resulting from iterated Cartesian product of intervals 
turn out to be quite useful, and  we will call them hybrid $k$-rectangles. 
A non-degenerate hybrid interval is a hybrid 1-rectangle, a cross product of two is a hybrid 2-rectangle, and so on.
In the following, we use the shorthand $k$-rectangle to mean hybrid $k$-rectangle.

\begin{theorem}
	The Cartesian product of a $k$-rectangle in $\mathbb{R}^m$ (where, $k\leq m$) 
	and $\ell$-rectangle in $\mathbb{R}^n$ (again, $\ell \leq n$) 
	is a $(k+\ell)$-rectangle in $\mathbb{R}^{m+n}$.
\end{theorem}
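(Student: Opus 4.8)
The plan is to show that this is essentially a bookkeeping consequence of the associativity of the Cartesian product of hybrid sets, which is already recorded just above. The first step is to put the two hypotheses into a common normal form. By a $k$-rectangle in $\mathbb{R}^m$ I read a Cartesian product $I_1 \times \cdots \times I_m$ of hybrid intervals over $\mathbb{R}$ in which exactly $k$ of the factors are non-degenerate and the remaining $m-k$ are single points; this is exactly why the condition $k \le m$ is forced, and the statement only parses under this convention. So I would write the given $k$-rectangle as $A = I_1 \times \cdots \times I_m$ and the given $\ell$-rectangle as $B = J_1 \times \cdots \times J_n$, each decomposition recording which coordinates carry a non-degenerate interval and which carry a point.

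The main step is then to apply associativity. Since each $I_j$ and each $J_j$ is a hybrid set over $\mathbb{R}$, repeated use of the associativity identity for $\times$ stated above gives
\[
A \times B = I_1 \times \cdots \times I_m \times J_1 \times \cdots \times J_n,
\]
an $(m+n)$-fold Cartesian product of hybrid intervals over $\mathbb{R}$, hence a hybrid rectangle in $\mathbb{R}^{m+n}$ by definition. It then remains only to count non-degenerate factors: the $k$ non-degenerate factors among the $I_j$ sit in coordinates $1,\dots,m$ and the $\ell$ non-degenerate factors among the $J_j$ sit in coordinates $m+1,\dots,m+n$, so they occupy disjoint coordinate positions and the iterated product has exactly $k+\ell$ non-degenerate factors. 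Thus $A \times B$ is a $(k+\ell)$-rectangle in $\mathbb{R}^{m+n}$, and $k+\ell \le m+n$ follows from $k\le m$ and $\ell\le n$.

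I do not expect a serious obstacle here; the only points needing care are (i) fixing the convention that a $k$-rectangle in $\mathbb{R}^m$ is the padded-out product of $k$ non-degenerate intervals with $m-k$ point-intervals, and (ii) checking that non-degeneracy is stable under $\times$ — which it is, since non-degeneracy concerns only the distinctness of the endpoints of the one-dimensional factors, not their multiplicities or signs, and the Cartesian product acts factorwise on the list of intervals. If one wanted to be fully explicit, one could add that the multiplicity of a point $(x_1,\dots,x_{m+n})$ in $A\times B$ is the product of the $m+n$ factor multiplicities, which is precisely the property that licenses calling an iterated product of intervals a rectangle in the first place.
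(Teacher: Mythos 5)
Your proof is correct and matches the paper's intent: the paper states this theorem without an explicit proof, treating it as immediate from its definition of hybrid $k$-rectangles as iterated Cartesian products (with $0$-rectangles padding out the degenerate coordinates) together with the associativity of $\times$, which is exactly the bookkeeping argument you give. Your explicit counting of non-degenerate factors and the remark that the multiplicity of a point in $A \times B$ is the product of the factor multiplicities fill in the details the paper leaves implicit.
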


For completeness we also define a 0-rectangle as a hybrid set containing a single point with multiplicity $1$ or $-1$.
This allows us to embed $k$-rectangles in $\mathbb{R}^n$.
For example $[\![a,b]\!]_\mathbb{R} \times [\![c,d]\!]_\mathbb{R} \times \hset{e^1}$ is the product of two 1-rectangles
and a 0-rectangle and so it is a 2-rectangle.
But it is still a Cartesian product of 3 hybrid sets (each over $\mathbb{R}$) and so is a 2-rectangle in $\mathbb{R}^3$.
Specifically, it is the 2-rectangle $[\![a,b]\!] \times [\![c,d]\!]$ on the plane $z=e$.
This also illustrates the principle that given a $k$-rectangle in $\mathbb{R}^n$ where $n>k$ we can always find a $k$ 
dimensional subspace which also contains the rectangle.

Finally, one last note regarding $k$-rectangles before we return to the topic of symbolic linear algebra.
We will re-use the interval notation and allow for hybrid intervals between two vectors: $[\![\boldsymbol{a}, \boldsymbol{b}]\!]$.
But one should be careful when interpreting this overloaded notation.
When $a$ and $b$ are real numbers, we continue to use the definition $[\![a,b]\!] = [a,b) \ominus [b,a)$.
However, when $\boldsymbol{a}$ and $\boldsymbol{b}$ are $n$-tuples (for example, coordinates in $\mathbb{R}^n$ 
then this is \emph{not} the hybrid line interval, 
$[\boldsymbol{a}, \boldsymbol{b}) \ominus [\boldsymbol{b}, \boldsymbol{a})$
rather we define it as follows:

\begin{definition}
	Let $\boldsymbol{a} = (a_1, a_2, \ldots, a_n)$ and 
	$\boldsymbol{b} = (b_1, b_2, \ldots, b_n)$ be ordered $n$-tuples then we use the notation:
	\begin{equation*}
		[\![ \boldsymbol{a}, \boldsymbol{b} ]\!] 
		= [\![a_1, b_1]\!] \times [\![a_2, b_2 ]\!] \times \ldots \times [\![a_n , b_n]\!].
	\end{equation*}
\end{definition}

The dimension of $[\![ \boldsymbol{a}, \boldsymbol{b} ]\!]$ is equal to the number of indices where $a_i$ and $b_i$ are distinct.
For any $i$ where $a_i = b_i$, the corresponding term: $[\![ a_i, b_i ]\!]$ will be a hybrid set containing a single point, that is, a 0-rectangle.
The multiplicity of $[\![ \boldsymbol{a}, \boldsymbol{b} ]\!]$ is based on the number of negative multiplicity intervals $[\![a_i,b_i]\!]$.
Should there be an odd number of indices $i$ such that $a_i > b_i$ then $[\![ \boldsymbol{a}, \boldsymbol{b} ]\!]$ will also have negative multiplicity.
Otherwise, it will have positive multiplicity.

For the remainder of this article, 
we will be interested only in matrices of dimension in 
$\mathbb{N}_0 \times \mathbb{N}_0$.
Here there is only room for a single Cartesian product and so this notation will not be immediately useful. 

\section{Matrix Addition}
\label{sec:MatrixAddition}

We now consider the addition of $2 \times 2$ block matrices $A$ and $B$ with overall dimensions $n \times m$ 
of the form
\begin{equation*}
	A = \left[ \begin{array}{c|c} A_{11} & A_{12} \\ \hline A_{21} & A_{22} \end{array} \right]
	\;\;\;\;\; \text{and} \;\;\;\;\;
	B = \left[ \begin{array}{c|c} B_{11} & B_{12} \\ \hline B_{21} & B_{22} \end{array} \right].
\end{equation*}
Since these are block matrices then $A_{ij}$ and $B_{ij}$ are not entries but sub matrices themselves.
We shall assume that $A_{11}$ is a $(q \times r)$ matrix and $B_{11}$ is a $(s \times t)$ matrix.
The sum of $A$ and $B$ will also be a $n \times m$ matrix.
Our universe, $\mathcal{U}$ is therefore the the set of all indices in an $n \times m$ matrix:
\begin{align*}
	\mathcal{U} 
		&=\; [\![0,n)\!)_{\mathbb{N}_0} \times [\![0,m)\!)_{\mathbb{N}_0}  \\
		&=\; \{ (i,j) \;|\; 0 \leq i < n \text{ and } 0 \leq j < m \text{ and } i,j \in \mathbb{N}_0 \} .
\end{align*}

First we must convert $A$ and $B$ to hybrid function notation. 
We use $\mathcal{A}_{ij}$ an $\mathcal{B}_{ij}$ to respectively denote the regions for which 
$A_{11}$ and $B_{ij}$ are defined.
Explicitly, these are
\begin{align*}
	\mathcal{A}_{11} &= [\![0,q)\!) \times [\![0,r)\!) &
	\mathcal{A}_{12} &= [\![0,q)\!) \times [\![r,m)\!) \\
	\mathcal{A}_{21} &= [\![q,n)\!) \times [\![0,r)\!) &
	\mathcal{A}_{22} &= [\![q,n)\!) \times [\![r,m)\!) \\
	\mathcal{B}_{11} &= [\![0,s)\!) \times [\![0,t)\!) &
	\mathcal{B}_{12} &= [\![0,s)\!) \times [\![t,m)\!) \\
	\mathcal{B}_{21} &= [\![s,n)\!) \times [\![0,t)\!) &
	\mathcal{B}_{22} &= [\![s,n)\!) \times [\![t,m)\!),
\end{align*}
which allow us to rewrite $A$ and $B$ as
\begin{align*}
	A &= A_{11}^{\mathcal{A}_{11}} \oplus 
		A_{12}^{\mathcal{A}_{12}} \oplus 
		A_{21}^{\mathcal{A}_{21}} \oplus 
		A_{22}^{\mathcal{A}_{22}} \\
	B &= B_{11}^{\mathcal{B}_{11}} \oplus 
		B_{12}^{\mathcal{B}_{12}} \oplus 
		B_{21}^{\mathcal{B}_{21}} \oplus 
		B_{22}^{\mathcal{B}_{22}}.
\end{align*}

Depending on the relation of $q$ with $s$ and $r$ with $t$ the regions in the sum of $A$ and $B$ may vary.
The shapes of block matrices that can arise have shown
in Figure~\ref{fig:MatAdditionPermutations}. 
Intuitively, the approach we take is to not concern ourselves with all possible cases that \emph{could} arise but to just choose one ordering.
\textbf{If this ordering is wrong, then the hybrid function multiplicities will cancel to yield the correct expression regardless.}

Since there are 4 partitions in $A$ and 4 partitions in $B$, we only require 7 pieces to form a common refinement.
To this refinement for, we follow the same method as used previously:
\begin{equation}
	\label{eqn:2x2CommonRefinement}
	\Big\{ \;
		\mathcal{A}_{11}, \; \mathcal{A}_{12}, \;  \mathcal{A}_{21}, \;
		\mathcal{B}_{11}, \; \mathcal{B}_{12}, \; \mathcal{B}_{21}, \; \mathcal{P} 
	\; \Big\}
\end{equation}
with $\mathcal{P}$ defined as
\begin{equation*}
	\mathcal{P} = \mathcal{U} 
		\ominus \left( \mathcal{A}_{11} \oplus \mathcal{A}_{12} \oplus \mathcal{A}_{21} \oplus
				\mathcal{B}_{11} \oplus \mathcal{B}_{12} \oplus \mathcal{B}_{21} \right).
\end{equation*}
Clearly we can still express $\mathcal{A}_{22}$ using only the terms from the common refinement by
\begin{align*}
	\mathcal{A}_{22} 
		&= \mathcal{U} \; \ominus \; (\mathcal{A}_{11} \oplus \mathcal{A}_{12} \oplus \mathcal{A}_{21}) \\[-0.1em]
		& = \mathcal{U} \; \ominus \;
            (\mathcal{A}_{11} \oplus \mathcal{A}_{12} \oplus \mathcal{A}_{21}  
			\oplus \mathcal{B}_{11} \oplus \mathcal{B}_{12} \oplus \mathcal{B}_{21})  \\[-0.6ex]&\phantom{=}\;
			\oplus \mathcal{B}_{11} \oplus \mathcal{B}_{12} \oplus \mathcal{B}_{21}\\[-0.1em]
		&= \mathcal{P} \oplus \mathcal{B}_{11} \oplus \mathcal{B}_{12} \oplus \mathcal{B}_{21}.
\end{align*}
Similarly $\mathcal{B}_{22}$ can be represented as $\mathcal{B}_{22} = \mathcal{P} \oplus \mathcal{A}_{11} \oplus \mathcal{A}_{12} \oplus \mathcal{A}_{21}$
and $\mathcal{U}$ as the sum of all 7 regions, 
$\mathcal{U} = 	\mathcal{A}_{11} \oplus \mathcal{A}_{12} \oplus \mathcal{A}_{21} \oplus 
				\mathcal{B}_{11} \oplus \mathcal{B}_{12} \oplus \mathcal{B}_{21} \oplus \mathcal{P}$.
Thus $A$ and $B$ can be rewritten using this new generalized partition as
\begin{align*}
	A &= A_{11}^{\mathcal{A}_{11}} \oplus 
		A_{12}^{\mathcal{A}_{12}} \oplus 
		A_{21}^{\mathcal{A}_{21}} \oplus 
		A_{22}^{\mathcal{P} \oplus \mathcal{B}_{11} \oplus \mathcal{B}_{12} \oplus \mathcal{B}_{21}} \\
	B &= B_{11}^{\mathcal{B}_{11}} \oplus 
		B_{12}^{\mathcal{B}_{12}} \oplus 
		B_{21}^{\mathcal{B}_{21}} \oplus 
		B_{22}^{\mathcal{P} \oplus \mathcal{A}_{11} \oplus \mathcal{A}_{12} \oplus \mathcal{A}_{21}} .
\end{align*}

With this, addition becomes straightforward:
we add functions for terms over corresponding regions.
Since we are using \emph{generalized partitions}, not traditional partitions we cannot guarantee disjointness.
We must also apply a \mbox{$+$-reduction} after summing each matching pair:
\begin{align*}
	(A+B) = \R[+] & \left(  (A_{11}+B_{22})^{\mathcal{A}_{11}} \oplus 
		(A_{12} + B_{22})^{\mathcal{A}_{12}} \oplus 
		(A_{21} + B_{22})^{\mathcal{A}_{21}} \right. \\[-0.5em] &\oplus 
		(A_{22} + B_{11})^{\mathcal{B}_{11}} \oplus 
		(A_{22} + B_{12})^{\mathcal{B}_{12}} \oplus 
		(A_{22} + B_{21})^{\mathcal{B}_{21}} \\[-0.5em] &\oplus
		\left.(A_{22} + B_{22})^{\mathcal{P}} \right).
\end{align*}

\subsection{Example: \emph{Evaluation at points}} 
We now demonstrate evaluating this expression.
Let us assume a point $(i,j)$ exists in the region $\mathcal{A}_{11} \cap \mathcal{B}_{12}$.
That is, $0 \leq i < \min(q,s)$ and $t \leq j < r$. 
Evaluating each of the hybrid sets from \eqref{eqn:2x2CommonRefinement} we find that only three have 
non-zero multiplicities: $\mathcal{A}_{11}(i,j)=1$, $\mathcal{B}_{12}=1$ and 
$\mathcal{P}(i,j)=1-(1+0+0+0+1+0)=-1$.
After removing all zero terms, this yields:
\begin{align*}
	(A+B)(i,j) &= \R[+]  \left(  (A_{11}+B_{22})^{1} \oplus 
		(A_{22} + B_{12})^{1} \oplus 
		(A_{22} + B_{22})^{-1} \right)\\
		&= \left(A_{11}+B_{22}) + (A_{22} + B_{12}) - (A_{22} + B_{22}\right)(i,j)\\
		&= (A_{11}+B_{12})(i,j).
\end{align*}

As a second example assume $(i,j) \in \mathcal{A}_{22} \cap \mathcal{B}_{12}$.
Then we find there is only one partition with non-zero multiplicity.
Clearly $\mathcal{B}_{12} = 1$ but $\mathcal{A}_{22} \notin$\eqref{eqn:2x2CommonRefinement}.
Calculating the multiplicity of $\mathcal{P}$ also yields $1-(0+0+0+0+1+0) = 0$.
Very simply:
\begin{align*}
	(A+B)(i,j) &= \R[+]  \left(  (A_{22}+B_{12})^{1} \right)(i,j)\\
		&=(A_{22}+B_{12})(i,j).
\end{align*}

\subsection{Addition with Larger Block Matrices}
This method extends easily from addition of two $2\times 2$ block matrices to arbitrary addition of block matrices.
If we consider (\emph{conformable}) $k \times \ell$ and $n \times m$ block matrices $A$ and $B$ respectively of the form:

\begin{equation*}
	A = \begin{bmatrix}
		A_{11} & \ldots & A_{1\ell}\\
		\vdots & & \vdots \\
		A_{k1} & \ldots & A_{k\ell}	
	\end{bmatrix}
	\;\;\;\;\;
	\text{ and }
	\;\;\;\;\;
	B = \begin{bmatrix}
		B_{11} & \ldots & B_{1m}\\
		\vdots & & \vdots \\
		B_{n1} & \ldots & B_{nm}	
	\end{bmatrix}.
\end{equation*}

For matrices to be conformable for addition they must have the same dimensions.
So we can partition the rows of $A$ by the strictly increasing sequence $\{q_i\}_{i=0}^k$ and 
the columns by $\{r_j \}_{j=0}^\ell$.
Similarly for $B$ we partition the rows by $\{s_i\}_{i=0}^n$ and the columns by $\{t_j\}_{j=0}^m$.
With the additional constraints that ${q_0 = r_0 = s_0 = t_0 = 0}$ and $q_k = s_n$ and $r_\ell = t_m$.
Each $A_{ij}$ and $B_{ij}$ is defined over a rectangular region $\mathcal{A}_{ij}$ and $\mathcal{B}_{ij}$:
 \begin{equation*}
	\mathcal{A}_{ij} = [\![q_{i-1}, q_i )\!) \times [\![ r_{j-1}, r_{j} )\!)
	\;\;\;\;\;
	\mathcal{B}_{ij} = [\![s_{i-1}, s_i )\!) \times [\![ t_{j-1}, t_{j} )\!).
\end{equation*}
which gives the expression
\begin{equation*}
\begin{aligned}
	(A+B) = \R[+] 
		&\left(\; \left( \bigoplus_{(i,j) \neq (n,m)} (A_{ij} + B_{nm})^{\mathcal{A}_{ij}} \right) \right .\oplus  \nonumber \\
		&\;\; \left .\left( \bigoplus_{(i,j) \neq (n,m)} (A_{nm} + B_{ij})^{\mathcal{B}_{ij}} \right) \oplus 
		 	(A_{nm} + B_{nm})^{\mathcal{P}} 
	\right).
\end{aligned}
\end{equation*}

%
\section{Matrix Multiplication}
\label{sec:MatrixMultiplication}

Next we consider the product of symbolic block matrices.
Again, we will assume $2 \times 2$ block matrices $A$ and $B$.
For these matrices to be conformable for multiplication they must be  $n \times m$ and $m \times p$ rather than
the same size as is required for addition,
\begin{equation*}
	A = \begin{bmatrix} A_{11} & A_{12} \\ A_{21} & A_{22} \end{bmatrix}
	\;\;\;\;\; \text{and} \;\;\;\;\;
	B = \begin{bmatrix} B_{11} & B_{12} \\ B_{21} & B_{22} \end{bmatrix}.
\end{equation*}
Where $A_{11}$ is a $q \times r$ matrix and $B_{11}$ is a $s \times t$ matrix.
Note that $0 \leq r , s \leq m$ but the ordering of $r$ and $s$ is unknown.

In the simplest case, $r=s$, four regions will arise each with simple closed expressions,
\begin{equation*}
	AB = \begin{bmatrix}
		\left( A_{11}B_{11}+A_{12}B_{21} \right) & \left( A_{11}B_{12}+A_{12}B_{22} \right) \\ 
		\left( A_{21}B_{11}+A_{22}B_{21} \right) & \left( A_{21}B_{12}+A_{22}B_{22} \right)
	\end{bmatrix}.
\end{equation*}
One should notice the similarity between this and multiplication of simple $2 \times 2$ matrices.
If we consider only the top-left block, since $r=s$ then the $(q \times r)$ matrix $A_{11}$ and the 
$(s \times t)$ matrix $B_{11}$ are conformable.
As are the $(q \times m-r)$ matrix $A_{12}$ and the $(m-s \times t)$ matrix $B_{21}$.
Both products will result in a $q \times t$ matrix which are conformable for addition.
Thus the term $A_{11}B_{11} + A_{12}B_{21}$ is a $q \times t$ block.

If $r \neq s$ then one approach would be to partition $A$ into a $2 \times 3$ block matrix 
split along the vertical lines $r$ and $s$ and the horizontal line $q$.
And split $B$ into a $3 \times 2$ block matrix split along the vertical line $t$  and the horizontal lines $r$ and $s$:
Depending on the relative ordering of $r$ and $s$ this may cause different blocks to be split.
If $s < r$ then $A_{11}$ and $A_{21}$ will be split into blocks with columns from 0 to $s$ and then from $s$ to $r$
while $B_{21}$ and $B_{22}$ would be split into blocks with rows from $s$ to $r$ and from $r$ to $m$,
\begin{equation*}
	A= \left[ \begin{array}{cc|c}
			A_{11}^{(1)} & A_{11}^{(2)} & A_{12}^{} \\[3pt]
			\hline
			A_{21}^{(1)} & A_{21}^{(2)} & A_{22}^{}  \rule{0pt}{12pt}
		\end{array} \right]
	\;\;\;\;\;\text{and}\;\;\;\;\;
	B = \left[ \begin{array}{c|c}
			B_{11}^{} & B_{12}^{} \\[3pt]
			\hline
			B_{21}^{(1)} & B_{22}^{(1)} \rule{0pt}{12pt}\\[3pt]
			B_{21}^{(2)} & B_{22}^{(2)} 
		\end{array} \right].
\end{equation*}

The resulting product is still a $2 \times 2$ matrix.
Additionally, each block is still the same size; the first block in the top-left is still $q \times t$.
However each block is now the sum of three block products:
\begin{align*}
	&AB 	= 	\\ &\begin{bmatrix}
				\left( A_{11}^{(1)}B_{11}^{}+ A_{11}^{(2)}B_{21}^{(1)} + A_{12}^{}B_{21}^{(2)} \right) & 
				\left( A_{11}^{(1)}B_{12}^{}+ A_{11}^{(2)}B_{22}^{(1)} + A_{12}^{}B_{22}^{(2)} \right) \\[5pt]
				\left( A_{21}^{(1)}B_{11}^{}+ A_{21}^{(2)}B_{21}^{(1)} + A_{22}^{}B_{21}^{(2)} \right) & 
				\left( A_{21}^{(1)}B_{12}^{}+ A_{21}^{(2)}B_{22}^{(1)} + A_{22}^{}B_{22}^{(2)} \right) 
			\end{bmatrix}.
\end{align*}
On the other hand, if $r < s$ then $A_{12}$ and $A_{22}$ will be the blocks split vertically while $B_{11}$ and $B_{12}$
will be split horizontally. 
In turn, this leads to a different expression for the product of $A$ and $B$.
In a now familiar pattern we can use hybrid functions to give a single expression 
to deal with all permutations simultaneously.

We shall refer to the product $AB$ as the block matrix $C$,
\begin{equation*}
	C = AB = \begin{bmatrix} C_{11} & C_{12} \\ C_{21} & C_{22} \end{bmatrix}.
\end{equation*}
$C$ is an $n \times p$ matrix as determined by the sizes of $A$ and $B$ and $C_{11}$ is a $q \times t$ sub-matrix.
This leaves $C_{12}$, $C_{21}$ and $C_{22}$ to be $q \times (p-t)$, $(n-q) \times t$ and $(n-q) \times (p-t)$ respectively.
We partition all three matrices along the axes $0.. n$, $0..p$ and $0..m$ into the hybrid intervals
\begin{align*}
	N_1 	&= [\![0, q)\!) 	& N_2 	&= [\![q, n)\!) 	\\
	P_1 	&= [\![0, t)\!) 	& P_2 	&= [\![t, p)\!) 	\\
	M_1 	&= [\![0, r)\!) 	& M_2 	&= [\![r, s)\!) 	& M_3 	&= [\![s, m)\!).
\end{align*}

Assumption is too strong a word, but these partitions follow the \emph{guess} that $r<s$.
So we construct expressions with this in mind. 
If we chose incorrectly, then we plan to use the negative multiplicity of $M_2$ to correct our expression.
Using these intervals, we can now rewrite our matrices inline as
\begin{equation}
\begin{aligned}
	A & =	A_{11}^{N_1 \times M_1} \oplus A_{12}^{N_1 \times (M_2 \oplus M_3)} \oplus 
			A_{21}^{N_2 \times M_1} \oplus A_{22}^{N_2 \times (M_2 \oplus M_3)} \\
	B & =	B_{11}^{(M_1 \oplus M_2) \times P_1} \oplus B_{12}^{(M_1 \oplus M_2) \times P_2} \oplus 
			B_{21}^{M_3 \times P_1} \oplus B_{12}^{M_3 \times P_2}\\
	C & =	C_{11}^{N_1 \times P_1} \oplus C_{12}^{N_1 \times P_2} \oplus
			C_{21}^{N_2 \times P_1} \oplus C_{22}^{N_2 \times P_2}.
\end{aligned}
	\label{eqn:2x2multiplicationblocks}
\end{equation}
It should be noted here that $\oplus$ is still the point-wise sum of hybrid functions.
It should not be confused with the direct sum or the Kronecker sum of matrices which both use the same $\oplus$ operator.
The $\times$ operator refers to the Cartesian product of intervals.

For $i,j \in \{ 1,2 \}$ the terms of $C$ are given by
\begin{equation}
\begin{aligned}
	C_{i,j}^{N_i \times P_j} (x,y) = \sum_{M} \R[\smallmult]  
		&\left( \;\;
			\left. 	A_{i,1}^{N_1 \times M_1}	\right|_{X=x} \;\oplus\;
			\left.	B_{1,j}^{M_1 \times P_1}	\right|_{Y=y} 
		\right.  \\
	 	&\oplus
	 		\left.	A_{i,2}^{N_1 \times M_2}	\right|_{X=x} \;\oplus\;
			\left. 	B_{1,j}^{M_2 \times P_1}	\right|_{Y=y} 
		\\
		&\oplus\left.
			\left.	A_{i,2}^{N_1 \times M_3}	\right|_{X=x} \;\oplus\;
			\left. 	B_{2,j}^{M_3 \times P_1}	\right|_{Y=y}
		\;\;\right).
\end{aligned}
	\label{eqn:2x2multiplication}
\end{equation}

There is some new notation here so let us unpack it.
Recall that we are taking the approach that matrices are simply functions defined on $\mathbb N_0 \times \mathbb N_0$.
As a function we can take a restriction of a matrix to a set of indices.
In the above, we use $X$ and $Y$ to denote the row and column indexing respectively.
For example with the matrix $M$, given below $M|_{X=0}$ and $M_{Y=0}$ would be as follows:
\begin{gather*}
	M = \begin{bmatrix}
		M[0,0] 	& \ldots 	& M[0,n] \\
		\vdots 	& 		& \vdots \\
		M[m,0]	& \ldots & M[m,n]
	\end{bmatrix} \\
	M|_{X=0} = \begin{bmatrix}
		M[0,0] & \ldots & M[0,n]
	\end{bmatrix} \quad
	M|_{Y=0} = \begin{bmatrix}
		M[0,0] \\ \vdots \\ M[m,0]
	\end{bmatrix}.
\end{gather*}

This is more powerful than just simple evaluation.
We are selecting not a fixed axis as $(x,y)$ is the input to our function.
So for a matrix $M|_{X=x}$ or $M|_{Y=y}$ we transform $M:X\times Y \to Z$
to the curried $M|_{X=i}:Y \to ( X \to Z)$ or $M|_{Y=j}:X \to (Y \to Z)$.
Within the context of equation \eqref{eqn:2x2multiplication}, this transforms the blocks of $A$ into horizontal vector slices 
and $B$ into vertical slices.

Ignoring the differences in transposition, when thought of as functions, these both map from $M$ 
(the common axis of $A$ and $B$) to functions with a common range.
We therefore have the pointwise sum of terms of the forms $m \mapsto ( x \mapsto A[x][m] )$ and 
$m \mapsto (y \mapsto B[m][y])$.
The work of multiplying matching $A[x][m]$ with $B[m][y]$ is handled by the $\R[\smallmult]$.
This leaves us with the product of two functions with different domains, but common range:
\begin{equation*}
	 ( x \mapsto A[x][m] ) \mult (y \mapsto B[m][y]) 
	 = (x,y) \mapsto A[x][m] \mult B[m][y].
\end{equation*}

Finally, we have the sum over $M$.
If $A$ and $B$ are matrices over a field $F$ then the \mbox{$\times$-reduction} 
yields a function $M \to (N \times P \to F)$.
Summing over the set $M$ leaves us with a function $(N \times P \to F)$ which agrees (at least by object type) 
with our expectations for $C$.
The familiar structure of summing over a product suggest correctness when $\big\{ M_1, M_2, M_3 \big\}$ 
is a strict partition of $M$ (that is, when $r \leq s$).
Despite the mental hurdles of say a $2 \times (-3)$ matrix, it continues to hold for general partitions as well.

\subsection{Example: \emph{Matrix Multiplication Concretely}}

We consider the product of two block matrices $Q$ and $R$.
For this example, to better differentiate between blocks, 
we change our notation slightly and give each block a distinct letter name: 
$A,B,C,D$ for the blocks of $Q$ and $E,F,G,H$ for the blocks of $R$,
\begin{equation*}
	Q = \left[ \begin{array}{cc|c}
		a_1 & a_2 & b_1 \\
		a_3 & a_4 & b_2 \\
		\hline
		c_1 & c_2 & d_1 \\
		c_3 & c_4 & d_2
	\end{array} \right]
	\;\;\; \text{and} \;\;\;
	R = \left[ \begin{array}{c|cccc}
		e_1 & f_1 & f_2 & f_3 & f_4 \\
		\hline
		g_1 & h_1 & h_2 & h_3 & h_4 \\
		g_2 & h_5 & h_6 & h_7 & h_8
	\end{array} \right].
\end{equation*}

We again use $N$, $M$ and $P$ for the sets of indices.
As $4 \times 3$ and $3\times 5$ matrices, we have $N = [\![0,3]\!]$, $M=[\![0,2]\!]$ and $P=[\![0,4]\!]$.
To align with the blocks of $Q$ and $R$, each of these sets is partitioned as follows:
\begin{align*}
	N_1 	&= [\![0, 1]\!] 	& N_2 	&= [\![2, 3]\!] 	\\
	P_1 	&= [\![0]\!] = \hset{0^{+1}}		& P_2 	&= [\![1, 4]\!] 	\\
	M_1 	&= [\![0, 1]\!] 	& M_2 	&= (\!(1)\!) = \hset{1^{-1}} 	& M_3 	&= [\![1, 2]\!].
\end{align*}
We should note here that our guess was wrong: $M_2$ has negative multiplicity!
Although we could have constructed two expressions to handle this case as well, this is not necessary.
We can continue as if nothing is wrong, and the hybrid function structure cancels multiplicities where needed.

We can still write $Q$  and $R$ as
\begin{align*}
	Q &= A^{N_1 \times M_1} \oplus 
		B^{N_1 \times (M_2 \oplus M_3)} \oplus
		C^{N_2 \times M_1} \oplus
		D^{N_2 \times (M_2 \oplus M_3)}\\
	R &= E^{(M_1 \oplus M_2) \times P_1} \oplus
		F^{(M_1 \oplus M_2) \times P_2} \oplus
		G^{M_3 \times P_1} \oplus
		H^{M_3 \times P_2}	.
\end{align*}
The only difference is that originally the sum $(M_2 \oplus M_3) = \{ 2 \}$ was intended to \emph{extend} $M_3$.
When $M_2$ is negative, it is a set of indices which is \emph{smaller} than the $M_3 = \{ 1,2 \}$ we started with.
Similarly, in the expression for $R$, $(M_1 \oplus M_2)$ is smaller than $M_1$.
We use $S$ to denote the product $QR$ which is still another $2\times 2$ block matrix by the same construction 
as equation \eqref{eqn:2x2multiplicationblocks}:
\begin{equation*}
	S = Q \cdot R  =
		\begin{bmatrix} S_1 & S_2 \\ S_3 & S_4 \end{bmatrix} =
		 {S_1}^{N_1 \times P_1} \oplus
		 {S_2}^{N_1 \times P_2} \oplus
		 {S_3}^{N_2 \times P_1} \oplus
		 {S_4}^{N_2 \times P_2}.
\end{equation*}

To continue the example, we compute the block $S_1$,
\spacetune{\pagebreak}
\begin{align*}
	{S_1}^{N_1 \times P_1}(i,j) &= \sum_{m \in M} \R[\smallmult]  \left( 
			\left. A^{N_1 \times M_1}\right|_{X=i} \oplus
			\left. E^{M_1 \times P_1}\right|_{Y=j} \oplus \right.\\
			&\;\left. B^{N_1 \times M_2}\right|_{X=i} \oplus
			\left. E^{M_2 \times P_1}\right|_{Y=j} \oplus\\ 
			&\;\left. \left. B^{N_1 \times M_3}\right|_{X=i} \oplus
			\left. G^{M_3 \times P_1}\right|_{Y=j}
	\right).
\end{align*}
As this is a small example, our curried functions only range over $\{ 0, 1, 2 \}$.
This is a small enough domain to express each of the functions as a set of point-wise mappings.
We therefore expand out each of our terms as formal \emph{hybrid sets} 
(recall a hybrid function is a special hybrid set of ordered pairs):
\begin{align*}
	\sum_{m \in M ?} \R[\smallmult] \bigg( 
		&\;\;\hset{
			\left( 0 \mapsto 
				\left[\begin{smallmatrix}\vphantom{b}a_1 \\ \vphantom{b}a_3 \end{smallmatrix}\right]
			\right)^{+1}, \;
			\left( 1 \mapsto 
				\left[\begin{smallmatrix}\vphantom{b}a_2 \\ \vphantom{b}a_4 \end{smallmatrix}\right]
			\right)^{+1}}  \\
		&\oplus \hset{ 
			\left( 0 \mapsto [ e_1 ] \right)^{+1}, \;
			\left( 1 \mapsto [e_\bot] \right)^{+1}} \\ 
		&\oplus \hset{ 
			\left( 1 \mapsto \left[\begin{smallmatrix} b_\bot \\ b_\bot \end{smallmatrix}\right] \right)^{-1} } 
              \oplus \hset{ 
			\left( 1 \mapsto [ e_\bot ] \right)^{-1} } \\
		&\oplus \hset{
			\left( 1 \mapsto \left[ \begin{smallmatrix} b_\bot \\ b_\bot \end{smallmatrix} \right] \right)^{+1}, \;
			\left( 2 \mapsto \left[ \begin{smallmatrix} b_1 \\ b_2 \end{smallmatrix}\right] \right)^{+1} } \\
		&\oplus \hset{
			\left( 2 \mapsto [ g_1 ] \right)^{+1},\; 
			\left( 2 \mapsto [ g_2 ] \right)^{+1}}	\bigg).
\end{align*}
\spacetune{\balance}

We are using $e_\bot$ and $b_\bot$ here to represent that the functions $E$ and $B$ are undefined for these points.
In reality, we would simply not even attempt to evaluate $B|_{X=x}(1)$ or $E|_{Y=y}(1)$ as the functions are undefined.
These points are actually contained in the $A$ and $G$ blocks, once again we must delay evaluation with pseudo-functions.

Applying the $\times$-reduction $\R[\smallmult]$, we group terms by their input value (e.g. $1 \mapsto x$ with $1 \mapsto y$) 
and flatten using the multiplicity to repeat or invert the $\mult$ operator.
In this case, we are dealing only with multiplicities of $+1$ and $-1$ which correspond with multiplication and ``division''.
This is not true division, as $0 \mult^{-1} 0 = 1$ without fear of division by zero.
Otherwise for non-zero operands, $\mult^{-1}$ agrees with the normal understanding of division.
This is made possible by working with multiplication as a \emph{group} rather than as a \emph{ring}
and so we are not worried the interactions between multiplication and addition.
This yields
\begin{align*}
	\sum_{M_1 \oplus M_2 \oplus M_3}
		& \left\{\!\left|\; \left(0 \mapsto 
			\left[\begin{smallmatrix}\vphantom{b}a_1 \\ \vphantom{b}a_3 \end{smallmatrix}\right] \mult^{+1} 
			[e_1] \right), \right.\right.\\
		&\;\;\;\left(1 \mapsto 
			\left[\begin{smallmatrix}\vphantom{b}a_2 \\ \vphantom{b}a_4 \end{smallmatrix}\right] \mult^{+1}
			[e_\bot] \mult^{-1}
			\left[\begin{smallmatrix} b_\bot \\ b_\bot \end{smallmatrix}\right] \mult^{-1}
			[ e_\bot ]\mult^{+1}
			\left[ \begin{smallmatrix} b_\bot \\ b_\bot \end{smallmatrix} \right] \right)
			[ g_1 ],\\
		&\;\;\left.\left.\left(2 \mapsto 
			\left[ \begin{smallmatrix} b_1 \\ b_2 \end{smallmatrix}\right]\mult^{+1}
			[ g_2 ] \right) \; \right|\!\right\}.
\end{align*}
After some cancellations in the second term, we evaluate $\mult^{+1}$ as matrix multiplication and sum over all of $M$
\begin{equation*}
	{S_1}^{N_1 \times P_1} =
			\begin{bmatrix}\vphantom{b}a_1 \\ \vphantom{b}a_3 \end{bmatrix}
			[e_1]
		+ 	\begin{bmatrix}\vphantom{b}a_2 \\ \vphantom{b}a_4 \end{bmatrix}
			[g_1]
		+	\begin{bmatrix} b_1 \\ b_2 \end{bmatrix}
			[ g_2 ]
		=  \begin{bmatrix}a_1e_1+a_2g_1+b_2g_2\\ a_3e_1+a_4b_1+b_2g_2\end{bmatrix}.
\end{equation*}
As expected, we have a $|N_1| \times |P_1| = (2 \times 1)$ matrix which will form the upper left block of $S$.
Ignoring the block structure of $Q$ and $R$ and performing normal matrix multiplication, we also find that these values 
agree with $S[0,0]$ and $S[1,0]$.
Computations for the blocks $S_2$, $S_3$ and $S_4$ are performed similarly yielding blocks of varying sizes.
Together, these blocks form a strict partitioning of $S$ as a $2\times 2$ block matrix.

\subsection{Multiplication with Larger Block Matrices}

Extending to larger block matrices is a fairly trivial affair.
Once again we use $\{N_i\}$ to divide the rows of blocks in $A$ 
and $\{P_j\}$ to divide the block columns of $B$.
$M_k$ and $M'_k$ are two different partitions of the common axis for $A$ and $B$ respectively,

\begin{align*}
	A &= \left[ \begin{array}{c|c|c}
			A_{1,1}^{N_1 \times M_1} 	& \ldots 		& A_{1,K}^{N_1 \times M_K} \\
			\hline
			\vdots 						& 			& \vdots \\
			\hline
			A_{I,1}^{N_I \times M_1} 	& \ldots 		& A_{I,K}^{N_I \times M_K}
		\end{array}\right]
		&
	B &= \left[ \begin{array}{c|c|c}
			B_{1,1}^{M_1 \times P_1} 	& \ldots 		& B_{1,J}^{M_1 \times P_J} \\
			\hline
			\vdots 						& 			& \vdots \\
			\hline
			B_{K',1}^{M_{K'} \times P_1} & \ldots 		& B_{K',J}^{M_{K'} \times P_J}
		\end{array}\right]
	\\[8pt] 
	&= \bigoplus_{i \in [\![1,I]\!]} \bigoplus_{k \in [\![1,K]\!]} A_{i,k}^{N_i \times M_k}&
	&= \bigoplus_{k' \in [\![1,K']\!]} \bigoplus_{j \in [\![1,J]\!]} B_{k',j}^{M'_{k'} \times P_j}.
\end{align*}
As before, the blocks of $C$ will be of sizes $N_i \times P_j$:
\begin{equation*}
	C 
	\;=\; 
		\bigoplus_{i \in [\![1,I]\!]} \bigoplus_{j \in [\![1,J]\!]} \left( C_{i,j}^{N_i \times P_j} \right)
	\;=\; 	
		\left[ \begin{array}{c|c|c}
			C_{1,1}^{N_1 \times P_1} 	& \ldots 		& C_{1,J}^{N_1 \times P_J} \\
			\hline
			\vdots 						& 			& \vdots \\
			\hline
			C_{I,1}^{N_I \times P_1} 	& \ldots 		& C_{I,J}^{N_I \times P_J}
		\end{array}\right],
\end{equation*}
where each $C_{i,j}$ is defined as
\begin{equation*}
	C_{i,j} = 
	\sum_M 
		\R[\smallmult] \left(
			\bigoplus_{k \in [\![1,K]\!]} \restrict{ A_{i,k}^{N_i \times M_k} }{X=x} \oplus
			\bigoplus_{k \in [\![1,K']\!]} \restrict{ B_{k',j}^{M'_k \times P_j} }{Y=y} 
	 	\right) .
\end{equation*}

%
%
\section{Conclusions}
We have developed the concept of \emph{hybrid intervals} as intervals on ordered sets with multiplicities that can be negative.
This allows various identities to be written compactly, without having to consider multiple cases.

We have recast \emph{symbolic block matrix algebra} by viewing matrices as piecewise functions of hybrid intervals of indices.
This formulation allows addition and multiplication of matrices of symbolic block sizes without worrying about how the block sizes relate to each other:   Any generic relationship among the block sizes may be assumed and the 
hybrid interval
multiplicities assemble the correct arguments for any particular values of the symbolic block size parameters.
\label{sec:Conclusions}

\end{document}